\definecolor{webgreen}{rgb}{0,.5,0}
\definecolor{webbrown}{rgb}{.6,0,0}
\newcommand{\comm}[1]{#1}
\newcommand{\tb}{%
\begin{tikzpicture}[scale=0.2, inner sep=-1pt]
   \tikzstyle{br} = [line width=1.3pt];
   \scriptsize
  \node (n1) at (2,3) {\tiny $1$};
   \node (n2) at (1,1) {\tiny $2$};
   \node (n3) at (3,1) {\tiny $3$};
   \path[br]
   (1.75,2.62) edge (1,1.5)
   (2.25,2.62) edge (3,1.5)
   ;
\end{tikzpicture}
}
\newcommand{\tc}{%
\begin{tikzpicture}[scale=0.2, inner sep=-1pt]
   \tikzstyle{br} = [line width=1.3pt];
   \scriptsize
  \node (n1) at (2,3) {\tiny $1$};
   \node (n2) at (1,1) {\tiny $3$};
   \node (n3) at (3,1) {\tiny $2$};
   \path[br]
   (1.75,2.62) edge (1,1.5)
   (2.25,2.62) edge (3,1.5)
   ;
\end{tikzpicture}
}
\newcommand{\ttb}{%
\begin{tikzpicture}[scale=0.2, inner sep=-1pt]
   \tikzstyle{br} = [line width=1.3pt];
   \scriptsize
  \node (n1) at (2,3) {$\bullet$};
   \node (n2) at (1,2) {$\bullet$};
   \node (n3) at (3,1) {$\bullet$};
   \path[br]
   (n1) edge (n2)
   (n1) edge (n3)
   ;
\end{tikzpicture}
}
\newcommand{\ttc}{%
\begin{tikzpicture}[scale=0.2, inner sep=-1pt]
   \tikzstyle{br} = [line width=1.3pt];
   \scriptsize
  \node (n1) at (2,3) {$\bullet$};
   \node (n2) at (1,1) {$\bullet$};
   \node (n3) at (3,2) {$\bullet$};
   \path[br]
   (n1) edge (n2)
   (n1) edge (n3)
   ;
\end{tikzpicture}
}
\newcommand{\ma}{%
\begin{tikzpicture}[scale=0.2, inner sep=-1pt]
   \tikzstyle{br} = [line width=1.3pt];
   \scriptsize
  \node (n1) at (2,2) {$\bullet$};
   \node (n2) at (1,1) {$\bullet$};
   \path[br]
   (n1) edge (n2)
   ;
   \end{tikzpicture}
}
\newcommand{\maa}{%
\begin{tikzpicture}[scale=0.2, inner sep=-1pt]
   \tikzstyle{br} = [line width=1.3pt];
   \scriptsize
  \node (n1) at (1,2) {$\bullet$};
   \node (n2) at (2,1) {$\bullet$};
   \path[br]
   (n1) edge (n2)
   ;
   \end{tikzpicture}
}
\newcommand{\mooo}{%
\begin{tikzpicture}[scale=0.2, inner sep=-1pt]
   \tikzstyle{br} = [line width=1.3pt];
   \scriptsize
  \node (n1) at (2,3) {$\bullet$};
   \node (n2) at (1,2) {$\bullet$};
   \node (n3) at (3,1) {$\bullet$};
   \path[br]
   (n1) edge (n2)
   (n1) edge (n3)
   ;
\end{tikzpicture}
}
\newcommand{\mo}{%
\begin{tikzpicture}[scale=0.2, inner sep=-1pt]

   \tikzstyle{br} = [line width=1.3pt];
   \scriptsize

   \node (n1) at (3,3) {$\bullet$};
   \node (n2) at (2,2) {$\bullet$};
   \node (n3) at (1,1) {$\bullet$};

   \path[br]
   (n1) edge (n2)
   (n2) edge (n3)
   ;
\end{tikzpicture}
}
\newcommand{\moool}{%
\begin{tikzpicture}[scale=0.2, inner sep=-1pt]
   \tikzstyle{br} = [line width=1.3pt];
   \scriptsize
  \node (n1) at (2,3) {$\bullet$};
   \node (n2) at (3,2) {$\bullet$};
   \node (n3) at (1,1) {$\bullet$};
   \path[br]
   (n1) edge (n2)
   (n1) edge (n3)
   ;
\end{tikzpicture}
}
\newcommand{\moo}{%
\begin{tikzpicture}[scale=0.2, inner sep=-1pt]
   \tikzstyle{br} = [line width=1.3pt];
   \scriptsize
   \node (n1) at (3,3) {$\bullet$};
   \node (n2) at (1,2) {$\bullet$};
   \node (n3) at (2,1) {$\bullet$};
   \path[br]
   (n1) edge (n2)
   (n2) edge (n3)
   ;
\end{tikzpicture}
}
\newcommand{\mol}{%
\begin{tikzpicture}[scale=0.2, inner sep=-1pt]

   \tikzstyle{br} = [line width=1.3pt];
   \scriptsize

   \node (n1) at (1,3) {$\bullet$};
   \node (n2) at (2,2) {$\bullet$};
   \node (n3) at (3,1) {$\bullet$};

   \path[br]
   (n1) edge (n2)
   (n2) edge (n3)
   ;
\end{tikzpicture}
}
\newcommand{\mool}{%
\begin{tikzpicture}[scale=0.2, inner sep=-1pt]
   \tikzstyle{br} = [line width=1.3pt];
   \scriptsize
   \node (n1) at (1,3) {$\bullet$};
   \node (n2) at (3,2) {$\bullet$};
   \node (n3) at (2,1) {$\bullet$};
   \path[br]
   (n1) edge (n2)
   (n2) edge (n3)
   ;
\end{tikzpicture}
}
\newcommand{\seqnum}[1]{\href{http://oeis.org/#1}{\underline{#1}}}
\theoremstyle{plain}
\newtheorem{cor}{Corollary}
\newtheorem{thm}{Theorem}
\theoremstyle{definition}
\title{Patterns in treeshelves}
\author{Jean-Luc Baril, Sergey Kirgizov and Vincent Vajnovszki}
\affil{LE2I UMR CNRS, Universit\'e Bourgogne Franche-Comt\'e\\
21078 Dijon, France
\authorcr \texttt{\{barjl\}\{sergey.kirgizov\}\{vvajnov\}@u-bourgogne.fr}}
\begin{document}

\maketitle

\begin{abstract}  We study the distribution and the popularity
of left children on sets of treeshelves avoiding a pattern of size three.
(Treeshelves are ordered binary increasing trees where every child is connected
to its parent by a left or a right link.)
The considered patterns are sub-treeshelves, and for each such a pattern
we provide exponential generating function for the corresponding distribution and popularity.
Finally, we present constructive bijections between treeshelves avoiding a pattern of size three
and some classes of simpler combinatorial objects.
\end{abstract}

{\bf Keywords:} Binary increasing tree, pattern, statistic, popularity, Bell/Euler(ian)/Lah number.

\section{Introduction and notation}

The study of patterns in permutations was first introduced by Knuth \cite{knu73},
and continues to be an active area of research today.
Recently, patterns have been studied in contexts other than permutations,
see for instance \cite{Corteel_and_co,Mansour_Shattuck}
where the combinatorial class under consideration are inversion sequences, which
can be seen as an alternative representation for permutations.
The present paper deals with treeshelves (formally defined below) which
are still another class in bijection with
permutations, and patterns  are sub-treeshelves contained or avoided in a similar way as
consecutive patterns do in permutations or in inversion sequences.
More precisely, we consider the class of unrestricted treeshelves
and of those avoiding a pattern of size 3 (treeshelves avoiding a
pattern of size 2 collapse trivially to a singleton set).
We not only enumerate these classes for any avoider of size 3,
but also give bivariate generating functions  with respect to the size and to the
number of occurrences of a second pattern of size 2.
As a byproduct we obtain the popularity among these classes of the pattern of size 2,
obtaining counting sequences which are not yet recorded in Sloane's Encyclopedia of Integer
Sequences \cite{sloa}.

Treeshelves are particular classes of binary increasing trees,
considered for example in Fran{\c{c}}on's
work~\cite{fra} in the context of data structures for binary search methods.
An {\em increasing tree} of size $n$,  is a rooted tree with $n$ nodes
labeled by distinct integers in $\{1,2,\dots,n\}$, so that the sequences of labels are increasing along all branches starting at the root (and thus,
the root is labeled by $1$).
A {\em binary increasing tree} (sometimes called 0-1-2 increasing tree)
is an increasing tree where every node has at most two children.
Many studies ({\it e.g.}, \cite{ber,bod,cal,don,kut}) investigate binary increasing trees, but very few deal with such trees endowed with the additional property that every child
(including those with no siblings) is connected to its parent by either a left or a right link.
We call such a binary increasing tree {\it treeshelf} (or {\it t-shelf} for short), and its size is the number of its nodes, see Figure \ref{tre} for a size 7 t-shelf.
We denote by $\mathcal{B}_n$ the set of size $n$ t-shelves, and $\mathcal{B}_1$ consists
of the single one-node t-shelf.
Often it is more convenient to represent graphically t-shelves by trees where the
integers labeling the nodes are proportional with the lengths of the branches.
For example, the size 3 t-shelf

\begin{center}
$\tb$ is represented by $\ttb$\,, and $\tc$ is represented by $\ttc$\,,
\end{center}
see also Figure~\ref{tre}.
In this representation, $\mathcal{B}_2=\{\ma\, ,\, \maa\}$,
and $\mathcal{B}_3=\{
\mo\, ,\,
\moool\, ,\,
\moo\, ,\,
\mooo\, ,\,
\mool\, ,\,
\mol\, ,\,
\}$.

We denote $\cup_{n\geq 0} \mathcal{B}_n$ by $\mathcal{B}$, and  $\cup_{n\geq 1} \mathcal{B}_n$
by $\mathcal{B}^\bullet$.
The labeled tree rooted at the left child of the root of a t-shelf $T$ becomes
a t-shelf after appropriately relabeling its nodes, and in the following
we refer to it as the {\it left t-shelf of $T$}, and similarly for
the {\it right t-shelf of $T$}.

There is a bijection between $\mathcal{B}_n$ and the set of permutations of size $n$, and so the cardinality of $\mathcal{B}_n$ is $n!$.
Indeed, to any t-shelf $T$ in $\mathcal{B}_n$ we can uniquely associate the length $n$
permutation $\pi=\alpha (n-r(T)+1) \beta$, where $r(T)$ is the label of the root of $T$, and $\alpha$  (resp. $\beta$) is recursively defined from the left (resp. right) t-shelf of $T$ (see again Figure \ref{tre}).
As mentioned by Bergeron, Flajolet, and Salvy~\cite{ber}, this construction appears in~\cite{fra}
and thereafter recalled in Stanley's book~\cite{sta}. Additional  information (including historical
notes) about binary and other families of increasing trees can be
found for example in~\cite{ber,com,gou}.

\begin{figure}[ht!]
\centering
\comm{
\begin{tikzpicture}[scale=0.45, inner sep=0pt]
  \large
  \draw[step=1cm,gray, very thin, xshift=0.5cm, yshift=0.5cm] (0, 0) grid (7,7);
  \node (n5) at (1,5) {$\bullet$};
  \node (n3) at (2,3) {$\bullet$};
  \node (n7) at (3,7) {$\bullet$};

  \node (n4) at (4,4) {$\bullet$};
  \node (n6) at (5,6) {$\bullet$};
  \node (n2) at (6,2) {$\bullet$};

  \node (n1) at (7,1) {$\bullet$};

  \scriptsize
  \node (C5) at (1,-0.2) {5};
  \node (C3) at (2,-0.2) {3};
  \node (C7) at (3,-0.2) {7};
  \node (C4) at (4,-0.2) {4};
  \node (C6) at (5,-0.2) {6};
  \node (C2) at (6,-0.2) {2};
  \node (C1) at (7,-0.2) {1};

  \path[line width=1.3pt]
  (n7) edge (n6) 



  (n2) edge (n1)
  ;

  \path[line width=1pt, dashed, color = red]

  (n6) edge (n4) edge (n2)
  ;

  \path[line width=2pt, dotted, color = green]

  (n7) edge (n5)
  (n5) edge (n3)
  ;

\end{tikzpicture}\qquad\begin{tikzpicture}[scale=0.45, inner sep=0pt]
  \large
  \node (n8) at (0,0.5) {};
  \node (n5) at (1,6) {$3$};
  \node (n3) at (2,4) {$5$};
  \node (n7) at (3,8) {$1$};

  \node (n4) at (4,5) {$4$};
  \node (n6) at (5,7) {$2$};
  \node (n2) at (6,3) {$6$};

  \node (n1) at (7,2) {$7$};

  \path[line width=1.3pt]
  (n7) edge (n6) 



  (n2) edge (n1)
  ;

  \path[line width=1pt, dashed, color = red]

  (n6) edge (n4) edge (n2)
  ;

  \path[line width=2pt, dotted, color = green]

  (n7) edge (n5)
  (n5) edge (n3)
  ;

\end{tikzpicture}
}
\caption{The t-shelf corresponding to the permutation $5\,3\,7\,4\,6\,2\,1$;
dashed/dotted lines correspond to different patterns of size three.}
\label{tre}
\end{figure}

In this paper we are interested in the sets of t-shelves avoiding a pattern
$P\in \mathcal{B}_3$, {\it i.e.},  the sets of those that do not contain any occurrence of $P$.
The containment/avoidance of a pattern in a t-shelf can most easily be explained
with examples. The avoidance of $\mooo$ in a t-shelf $T$ means that $T$ does not contain any node where the label of its left child is less than that of its right child. The t-shelf in Figure \ref{tre} contains only one pattern $\mooo$ (illustrated by dashed lines), one pattern $\moo$ (dotted)
and avoids the pattern $\mo$.

\medskip

Since the number of $\maa\,$ patterns in a t-shelf is equal to the size of the t-shelf
minus the number of $\ma\,$ patterns, minus one, in the following we will consider only $\ma\,$ patterns.
Moreover, an occurrence of the $\ma\,$ pattern is equivalent to
that of a left child in the underlying tree
of the t-shelf, we will refer to this pattern as a left child (similarly the pattern $\maa\,$
corresponds to a right child).
Also, since the patterns $\moo\,$ and $\mool\,$ are equivalent by symmetry,
and so are the patterns $\mo\,$ and $\mol\,$, and the patterns $\mooo\,$ and $\moool\,$,
we will consider
only avoiders $P$ in $\{\moo\,,\mo\,,\mooo\,\}$.

T-shelves are labeled combinatorial objects, and so
it is appropriate to use exponential generating functions (e.g.f.) for the enumerative analysis of them.
In Section \ref{avoiding}, for each of the avoiders $P$ above mentioned,
we consider the set $\mathcal{B}(P)$ of t-shelves avoiding $P$,
or $\mathcal{B}^\bullet(P)$ when we restrict to non-empty t-shelves.
We provide a bivariate exponential generating function for each $\mathcal{B}(P)$
with respect to the size and the number of left children,
that is, function where the coefficient of $\frac{z^ny^k}{n!}$ in its series expansion is the number of t-shelves of size $n$ having
exactly $k$ left children, and deduce the e.g.f. for $\mathcal{B}(P)$ with respect to the size.
We also give the e.g.f. for the popularity of the left children among $\mathcal{B}(P)$,
function where the coefficient of  $\frac{z^n}{n!}$ in its series expansion is the total number of left children
appearing in all
size $n$ t-shelves in $\mathcal{B}(P)$. These results are summarized  in Tables 1 and 2.

Our method consists in constructing recursively the combinatorial class in question
from two smaller classes, $\mathcal{A}_1$ and $\mathcal{A}_2$,
using the usual labeled product $\mathcal{A}_1\star\mathcal{A}_2$ and the boxed product
$\mathcal{A}_1^\square\star\mathcal{A}_2$.
The boxed product $\mathcal{A}_1^\square\star\mathcal{A}_2$ is a subset of $\mathcal{A}_1\star\mathcal{A}_2$ where the smallest label appears in the $\mathcal{A}_1$ component. See~\cite{fla} for more information about the boxed product and its application on labeled combinatorial structures.

Theorems \ref{new_th_bij}-\ref{th6} in Section \ref{bijections} give constructive proofs of some
results in Section \ref{avoiding}, namely constructive bijections between:
(i) t-shelves avoiding $\moo\,$ and set partitions,
(ii) (unordered) binary increasing trees where every node of degree one has either a left or a
right child and t-shelves avoiding the pattern $\mooo$, and
(iii) unordered binary increasing trees and t-shelves avoiding the pattern $\mo$.

\section{T-shelves avoiding  a size 3 pattern}
\label{avoiding}
%

%
%
%

We begin this section by considering unrestricted t-shelves, then we extend our
approach to those avoiding a pattern in $\{\moo\,,\mo\,,\mooo\,\}\subset\mathcal{B}_3$.

A t-shelf is either empty or consists of a root with two (possibly empty) children.
Thus, the set $\mathcal{B}$ of unrestricted t-shelves can be expressed as
\begin{equation*}
\mathcal{B} = \epsilon+
\mathcal{Z}^\square \star \mathcal{B}^2,
\label{eqB}
\end{equation*}
where $\mathcal{Z}$ corresponds to the atom, {\it i.e.}, the singleton formed by the unique object of size one.

As the boxed product $\mathcal{A}^\square_1\star\mathcal{A}_2$ has its exponential generating function given by $\int_0^z \! \partial_t A_1(t)\cdot A_2(t)\, \mathrm{d}t$,
where  $A_1(t)$ and $A_2(t)$ are the exponential generating functions of
$\mathcal{A}_1$ and $\mathcal{A}_2$, respectively
(see~\cite[Theorem II.5]{fla}), we obtain the differential equation
$$
B(z) = 1+\int_0^z \!  B^2 (t)\, \mathrm{d}t,
$$
which, with the initial condition $B(0)=1$,
gives as expected $B(z)=\frac{1}{1-z}$, the e.g.f. for the sequence $n!$.

If we are interested in the bivariate exponential generating function
$B(z,y)$ where the coefficient of $\frac{z^ny^k}{n!}$ is the number of t-shelves of size $n$ having exactly $k$ left children
(or, equivalently by symmetry, $k$ right children),
then it is more convenient to consider the set $\mathcal{B}^\bullet$ of non-empty t-shelves.
A t-shelf $T\in\mathcal{B}^\bullet$ can be in one of the following cases:
the root of $T$ either
\begin{itemize}[topsep=-0.5mm,itemsep=-1mm]
\item[$-$] has no children ($T$ is reduced to one root node), in this case
the set of such $T$ is $\mathcal{Z}$; or
\item[$-$] has only a left or only a right child, in both cases
the set of such $T$ is $\mathcal{Z}^\square \star \mathcal{B}^\bullet$; or
\item[$-$] has both left and right children,
the set of such $T$ is $\mathcal{Z}^\square \star \mathcal{B}^\bullet\star \mathcal{B}^\bullet$.
\end{itemize}
Thus, $\mathcal{B}^\bullet$ can be expressed as
\begin{equation*}
\mathcal{B}^\bullet = \mathcal{Z}+
\mathcal{Z}^\square \star \mathcal{B}^\bullet +
\mathcal{Z}^\square \star \mathcal{B}^\bullet+
\mathcal{Z}^\square \star (\mathcal{B}^\bullet)^2,
\end{equation*}
and after multiplying by $y$ whenever a new left child is created,
we obtain the differential equation
 $$B^\bullet(z,y) = z+ \int_0^z \! B^\bullet(t,y) \, \mathrm{d}t
+ y \int_0^z \! B^\bullet(t,y) \, \mathrm{d}t
 + y \int_0^z \! (B^\bullet(t,y))^2 \, \mathrm{d}t,
$$
where $B(0,y)^\bullet=0$, and its solution is
$
B^\bullet(z,y)=
\frac {1-{\rm e}^{z \left( y-1 \right) }}{{{\rm e}^{z \left( y-1 \right) }}-y}.
$
Finally,
\begin{equation*}
B(z,y)=1+B^\bullet(z,y)=\frac {1-y }{{{\rm e}^{z \left( y-1 \right) }}-y},
\end{equation*}
%
and we retrieve two well known results, see \cite[Exercise 1.9]{pet}:
\begin{itemize}[topsep=-0.5mm,itemsep=-1mm]
\item[$-$] the distribution of the left children on the set $\mathcal{B}$ has the exponential
generating function $B(z,y)$, and it is given by a shift of the Eulerian numbers
      (sequence \seqnum{A008292} in OEIS \cite{sloa}); and
\item[$-$] the popularity of the left (or right) children among $\mathcal{B}$, which is  the coefficient of $\frac{z^n}{n!}$ in
$\partial_y B(z,y)|_{y=1}={\frac {{z}^{2}}{2\,{z}^{2}-4\,z+2}}$, is
given by the Lah numbers (sequence \seqnum{A001286} in OEIS \cite{sloa}).
\end{itemize}



\medskip

In the following, for each t-shelf $P\in\{\moo\,,\mo\,,\mooo\,\}$ we will count the class $\mathcal{B}(P)$
(or  $\mathcal{B}^\bullet(P)$)
of t-shelves avoiding  $P$, and explore the distribution and the popularity of left children
({\it i.e}, of the pattern $\ma\,$) among each class.

\subsection{Pattern \protect\moo}
Here we consider $\mathcal{B}(P)$ with $P=\moo$,
that is, t-shelves having all they left children with no right child, we refer to Figure~\ref{bells} for an
illustration of the shape of such a t-shelf.

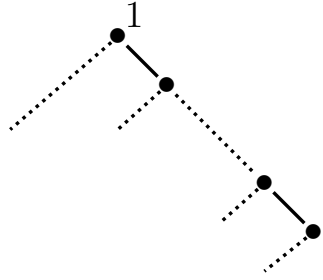
\begin{figure}[H]
\comm{
  \centering
     \begin{tikzpicture}[scale=0.65, inner sep=0pt]

  \tikzstyle{br} = [line width=1.3pt]

  \large

  \node[label={[xshift=1ex]$1$}]
        (n1) at (0, 0) {$\bullet$};
  \node (n2) at (1,-1) {$\bullet$};
  \node (n3) at (3,-3) {$\bullet$};
  \node (n4) at (4,-4) {$\bullet$};

  \draw[dotted, br] (0,0) --        (-2.2,-1.9);
  \draw[dotted, br] (1,-1) --        (0,-1.9);
  \draw[dotted, br] (3,-3) --        (2.1,-3.8);
  \draw[dotted, br] (4,-4) --        (3,-4.8);

  \path[br]
  (n1) edge (n2)
  (n2) edge[dotted] (n3)
  (n3) edge (n4)
  ;

\end{tikzpicture}
}
\caption{The shape of a t-shelf avoiding pattern \protect\moo.}
\label{bells}
\end{figure}

\begin{thm}
  \label{th1}
Let $P$ be the pattern $\moo$. The bivariate e.g.f. for $\mathcal{B}(P)$ with respect to the size of t-shelves and the number of left children is given by
$$
C(z,y) = e^\frac{e^{zy}-1}{y}.
$$
\end{thm}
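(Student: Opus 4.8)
The plan is to exploit the structural characterization of $\moo$-avoiders already noted in the text: since an occurrence of $\moo$ is a node whose left child carries a right child, avoiding $\moo$ is equivalent to requiring that every left child has no right child. First I would turn this into a recursive decomposition. Fix a nonempty $T\in\mathcal{B}(P)$ and split at its root $r$, which, having the smallest label, is the natural object of a boxed product. The right subtree of $r$ is an arbitrary element of $\mathcal{B}(P)$, because the root of a right subtree is a right child and inherits no constraint. The left subtree, however, is rooted at a left child; that left child may have no right child, and the same reasoning applies recursively down the left spine. Hence the left subtree is forced to be a pure left-chain: a path of nodes each attached to its parent by a left link, with increasing labels. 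I would record this as the symbolic equation $\mathcal{B}(P)=\epsilon+\mathcal{Z}^\square\star\mathcal{L}\star\mathcal{B}(P)$, where $\mathcal{L}$ is the class of (possibly empty) left-chains attached as the left subtree.

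Next I would pass to generating functions, marking each left child (that is, each $\ma$ pattern) by $y$. The key bookkeeping step is the contribution of the left-chain factor: a left-chain on $k\geq 1$ nodes attached below $r$ creates exactly $k$ left children, since its own root is a left child of $r$ and each of the remaining $k-1$ nodes is a left child of its predecessor, so this factor has e.g.f. $\sum_{k\geq 0}(zy)^k/k!=e^{zy}$. Applying the boxed-product rule $\int_0^z\partial_t A_1(t)\cdot A_2(t)\,\mathrm{d}t$ with $A_1(t)=t$ (the atom $\mathcal{Z}$) and $A_2(t)=e^{ty}C(t,y)$ yields the integral equation $C(z,y)=1+\int_0^z e^{ty}C(t,y)\,\mathrm{d}t$.

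Finally I would differentiate in $z$ to obtain the separable first-order ODE $\partial_z C(z,y)=e^{zy}C(z,y)$ subject to $C(0,y)=1$. Integrating $\partial_z C/C=e^{zy}$ gives $\ln C=(e^{zy}-1)/y$ after fixing the constant by the initial condition, hence $C(z,y)=\exp\bigl((e^{zy}-1)/y\bigr)$, as claimed. As a sanity check I would confirm $C(z,1)=\exp(e^z-1)$, the Bell-number e.g.f., which is consistent with the bijection with set partitions promised later in the paper.

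The main obstacle is not the analysis but the combinatorial step: justifying rigorously that the left subtree of every node must be a left-chain, and, closely tied to this, getting the $y$-marking exactly right, since a left-chain of $k$ nodes contributes $y^k$ rather than $y^{k-1}$ once its root is counted as a left child of the node above it. Everything downstream, namely the boxed product and the resulting ODE, is then routine.
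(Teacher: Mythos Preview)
Your proof is correct and follows essentially the same route as the paper: the paper also decomposes a nonempty $T\in\mathcal{B}(P)$ into a left-chain piece containing the smallest label and a right subtree in $\mathcal{B}(P)$, arriving at the identical integral equation $C(z,y)=1+\int_0^z e^{ty}C(t,y)\,\mathrm{d}t$ and solving it the same way. The only cosmetic difference is that the paper bundles the root together with its left chain into a single class $\mathcal{D}$ (nonempty t-shelves with no right children, e.g.f.\ $(e^{zy}-1)/y$) and writes $\mathcal{C}=\epsilon+\mathcal{D}^\square\star\mathcal{C}$, which makes the $y$-bookkeeping automatic since $\partial_t D(t,y)=e^{ty}$; your separation $\mathcal{Z}^\square\star\mathcal{L}\star\mathcal{B}(P)$ reaches the same integrand once you account, as you do, for the extra left link at the top of the chain.
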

\proof
Let $\mathcal{C}=\mathcal{B}(P)$ and $T\in \mathcal{C}$.
According to the shape of the t-shelves in $\mathcal{C}$ (see Figure~\ref{bells}),
if $T$ is non-empty, then it is obtained by a pair of t-shelves, namely
a non-empty t-shelf with no right children containing the smallest label of $T$, and a second unrestricted
t-shelf in $\mathcal{C}$. The set of such non-empty t-shelves is
$\mathcal{D}^\square\star\mathcal{C}$, where $\mathcal{D}$ is the set of
non-empty t-shelves with no right children. Thus we have
$$\mathcal{C}=\epsilon+\mathcal{D}^\square\star\mathcal{C}.
$$

Since the bivariate exponential generating function for $\mathcal{D}$ is $D(z,y)= \frac{e^{zy} - 1}{y}$, we obtain the differential equation
$$ C(z,y) = 1 +
\int_0^z \! e^{ty}\cdot C(t,y)\cdot \, \mathrm{d}t
$$
where $C(0,y)=1$, with the solution $C(z,y) = e^\frac{e^{zy}-1}{y}.$
\endproof

By calculating $C(z,1)$ we have the following corollary.
\begin{cor}\label{cor2} The exponential generating function for the set $\mathcal{B}(P)$ with respect to the size of t-shelves
is $\mbox{Bell}(z)=e^{e^{z}-1}$, which generates the Bell numbers (sequence {\em \seqnum{A000110}} in OEIS \cite{sloa}).
\end{cor}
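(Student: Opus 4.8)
The plan is to specialize the bivariate generating function furnished by Theorem~\ref{th1}, setting $y=1$. Recall that in $C(z,y)$ the coefficient of $\frac{z^ny^k}{n!}$ records the number of size-$n$ t-shelves in $\mathcal{B}(P)$ having exactly $k$ left children. Summing over all admissible values of $k$—which is exactly what the substitution $y=1$ accomplishes—collapses this refined count into the plain enumeration by size. Hence $C(z,1)$ is, by construction, the e.g.f.\ whose $\frac{z^n}{n!}$-coefficient equals the total number of size-$n$ t-shelves avoiding $P$, and proving the corollary reduces to evaluating this specialization.

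First I would perform the substitution directly in the closed form $C(z,y)=e^{(e^{zy}-1)/y}$. Although $y$ appears in the denominator of the exponent, the value $y=1$ is regular (the only potential issue would be at $y=0$), so no limit argument is needed. Setting $y=1$ turns the exponent into $\frac{e^{z}-1}{1}=e^{z}-1$, which gives
$$
C(z,1)=e^{e^{z}-1}.
$$

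Finally I would identify $e^{e^{z}-1}$ as the exponential generating function of the Bell numbers, the classical generating identity for \seqnum{A000110}. I do not expect any genuine obstacle here: all the combinatorial and analytic work is already carried out in Theorem~\ref{th1}, and the corollary amounts to a one-line substitution followed by recognition of a standard e.g.f. The only point worth stating explicitly is the interpretation of $y=1$ as the forgetful specialization that discards the left-child statistic while retaining the size.
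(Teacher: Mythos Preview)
Your proposal is correct and follows exactly the paper's own approach: the corollary is obtained simply by setting $y=1$ in the bivariate e.g.f.\ $C(z,y)=e^{(e^{zy}-1)/y}$ from Theorem~\ref{th1} and recognizing $e^{e^{z}-1}$ as the Bell e.g.f. Your added remarks about why the specialization is legitimate and what it means combinatorially are accurate but go slightly beyond what the paper states, which is just the one-line ``by calculating $C(z,1)$.''
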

\begin{cor}\label{cor3}
The popularity of the left children among the set $\mathcal{B}(P)$ is given by the exponential generating
function $$PC(z)=\left( z{{\rm e}^{z}}-{{\rm e}^{z}}+1 \right) {{\rm e}^{{{\rm e}^{z}}
-1}}.$$
Moreover, the coefficient $pc_n$ of $\frac{z^n}{n!}$ in $PC(z)$ satisfies $pc_n=(n+1)b_n-b_{n+1}$ where $b_n$ is the
$n${\em th} Bell number. The asymptotic of $pc_n$ is given by
$$\sqrt {n} \left(\frac {n}{W(n)}\right) ^{n+\frac{1}{2}}{{\rm e}^{ \frac {n}{W(n)}-n-1} },$$
where $W$ is the Lambert function \cite{euler,Polya_Szego}, that is,
$W\left( n \right)$ is the unique solution of \mbox{$W(n)\cdot e^{W(n)}=n$}.
\end{cor}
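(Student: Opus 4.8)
The plan is to extract the popularity generating function by differentiating the bivariate e.g.f. from Theorem \ref{th1} with respect to the marking variable $y$ and then setting $y=1$, since by construction the coefficient of $\frac{z^n}{n!}$ in $\partial_y C(z,y)|_{y=1}$ is exactly the total number of left children over all size-$n$ t-shelves in $\mathcal{B}(P)$. Writing $C(z,y)=\exp\!\big(\frac{e^{zy}-1}{y}\big)$, I would compute $\partial_y C(z,y)=C(z,y)\cdot\partial_y\big(\frac{e^{zy}-1}{y}\big)$; the inner derivative is $\frac{zy e^{zy}-(e^{zy}-1)}{y^2}$, and evaluating at $y=1$ collapses the exponent to $e^{z}-1$, yielding $PC(z)=(ze^{z}-e^{z}+1)\,e^{e^{z}-1}$ as claimed.

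Next I would establish the recurrence $pc_n=(n+1)b_n-b_{n+1}$. The cleanest route is to recognize $PC(z)=(ze^{z}-e^{z}+1)\,\mathrm{Bell}(z)$ and relate the three factors to operations on the Bell e.g.f. Since $\mathrm{Bell}'(z)=e^{z}\,\mathrm{Bell}(z)$, the product $e^{z}\,\mathrm{Bell}(z)$ is the e.g.f. whose $n$th coefficient is $b_{n+1}$ (shift), while multiplication by $z$ corresponds to the operation sending the coefficient sequence $a_n$ to $n\,a_{n-1}$. Tracking each term, I would write $PC(z)=z\,\mathrm{Bell}'(z)-\mathrm{Bell}'(z)+\mathrm{Bell}(z)$ using $e^z\mathrm{Bell}(z)=\mathrm{Bell}'(z)$, and then read off coefficients: the $\frac{z^n}{n!}$ coefficient of $\mathrm{Bell}'(z)$ is $b_{n+1}$, that of $z\,\mathrm{Bell}'(z)$ is $n\cdot b_{n}$, and that of $\mathrm{Bell}(z)$ is $b_n$. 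Combining gives $pc_n=n b_n-b_{n+1}+b_n=(n+1)b_n-b_{n+1}$.

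For the asymptotics I would invoke the classical saddle-point estimate for the Bell numbers, where $b_n\sim \frac{1}{\sqrt{n}}\big(\frac{n}{W(n)}\big)^{n+1/2}e^{n/W(n)-n-1}$ with $W$ the Lambert function solving $W(n)e^{W(n)}=n$. From $pc_n=(n+1)b_n-b_{n+1}$ I would argue that the dominant contribution comes from understanding the ratio $b_{n+1}/b_n$; using the standard fact that $b_{n+1}/b_n\sim n/W(n)$, one finds $(n+1)b_n-b_{n+1}\sim\big(n+1-\tfrac{n}{W(n)}\big)b_n$, and then a careful expansion shows this matches the stated form $\sqrt{n}\big(\frac{n}{W(n)}\big)^{n+1/2}e^{n/W(n)-n-1}$. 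The main obstacle will be the asymptotic step: the leading factors $(n+1)b_n$ and $b_{n+1}$ are each of the same exponential order, so naive estimation of each term separately loses the constant; I expect the real work to be controlling the cancellation in $(n+1)-b_{n+1}/b_n$ to enough precision that the resulting prefactor is correct, which requires the finer saddle-point expansion of $b_{n+1}/b_n$ rather than just its leading term.
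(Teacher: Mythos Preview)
Your proposal is correct and matches the paper's approach exactly: the paper also computes $PC(z)=\partial_y C(z,y)|_{y=1}$, rewrites it as $(z-1)\,\mathrm{Bell}'(z)+\mathrm{Bell}(z)$ to read off $pc_n=(n+1)b_n-b_{n+1}$, and then invokes the known asymptotics $b_{n+1}/b_n\sim n/\ln n$ (Klazar/Knuth) together with the Odlyzko estimate for $b_n$. Your worry about delicate cancellation in the asymptotic step is unfounded: since $b_{n+1}/b_n\sim n/\ln n=o(n)$, the term $(n+1)b_n$ strictly dominates $b_{n+1}$, so $pc_n\sim n\,b_n$ directly and no finer expansion of the ratio is needed.
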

\noindent
(The first terms of $pc_n$, $n\geq 2$, are $1,5,23,109,544,2876,16113,95495$.)
\proof The popularity is given by $\partial_y C(z,y)|_{y=1}=\left( z{{\rm e}^{z}}-{{\rm e}^{z}}+1 \right) {{\rm e}^{{{\rm e}^{z}}
-1}}$.
The recurrence for $pc_n$ is directly obtain from the
relation $(z-1)\partial_z \mbox{Bell}(z)+\mbox{Bell}(z)=PC(z)$.

\noindent
Finally, the asymptotic follows from the asymptotic formula due to M. Klazar
\cite[Proposition 2.6]{kla} and D.E. Knuth \cite[eq. (30), p. 69]{knu}:
$$\frac{b_{n+1}}{b_n}\sim\frac{n}{\ln (n)},
$$
and from the well known asymptotic for the Bell numbers
(see A.M. Odlyzko \cite{odl}):
$$ {\frac {1}{\sqrt {n}}}\left( {\frac {n}{{\it W} \left( n \right) }} \right) ^{n+\frac{1}{2}}
{{\rm e}^{{\frac {n}{{\it W} \left( n \right) }}-n-1}},
$$
where
$\it W \left( n \right)$ is the unique solution of $ {\it W}\left( n \right)\cdot e^{{\it W}\left( n \right)}=n$.
\endproof


\subsection{Pattern \protect\mo}
Here we consider the set $\mathcal{B}(P)$ of t-shelves avoiding the pattern $P=\mo$.

\begin{thm}
\label{th2}
Let $P$ be the pattern $\mo$. Then the bivariate e.g.f. for $\mathcal{B}(P)$ with respect to the size of t-shelves and the number of left children is given by
 $$ E(z,y) = \frac{2 y - 1}{y \cosh{ \left (z \sqrt{- 2 y + 1} + \ln{\left
(\frac{1}{y} \left(y + \sqrt{- 2 y + 1} - 1\right) \right )} \right )}
+ y}.$$
\end{thm}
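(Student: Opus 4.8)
The plan is to mirror the approach of Theorem~\ref{th1}: build a combinatorial decomposition of $\mathcal{B}(P)$, translate it into a differential equation via the boxed-product rule, and solve. The essential structural observation for $P=\mo$ is that avoiding $\mo$ means no left child may itself carry a left child; equivalently, \emph{every left child is left-childless}. To exploit this I would introduce, alongside $\mathcal{E}=\mathcal{B}(P)$, the subclass $\mathcal{G}\subseteq\mathcal{E}$ of those avoiders whose root has no left child (with $\epsilon\in\mathcal{G}$). A nonempty $T\in\mathcal{E}$ splits according to whether its root carries a left child: if not, then $T\in\mathcal{Z}^\square\star\mathcal{E}$ (right subtree arbitrary); if so, the left subtree must be a nonempty member of $\mathcal{G}$ (its root, now a left child, has to be left-childless) while the right subtree is an arbitrary avoider, giving $\mathcal{Z}^\square\star\mathcal{G}^\bullet\star\mathcal{E}$. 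Collecting the cases,
$$\mathcal{E}=\epsilon+\mathcal{Z}^\square\star\mathcal{E}+\mathcal{Z}^\square\star\mathcal{G}^\bullet\star\mathcal{E},\qquad \mathcal{G}=\epsilon+\mathcal{Z}^\square\star\mathcal{E},$$
with a factor $y$ attached to the last summand of $\mathcal{E}$ since exactly one new left child is created there.

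Applying the boxed-product e.g.f. rule $\int_0^z\partial_t A_1\cdot A_2\,\mathrm dt$ (and $\partial_t t=1$), the identities become $E=1+\int_0^z E\,\mathrm dt+y\int_0^z(G-1)E\,\mathrm dt$ and $G=1+\int_0^z E\,\mathrm dt$, whence
$$E'=E\bigl(1+y(G-1)\bigr),\qquad G'=E,\qquad v''=y\,e^{v},$$
the last being obtained by the key simplification $v=\ln E$: then $v'=1+y(G-1)$ and $v''=yG'=yE=y\,e^{v}$, with $v(0)=0$ and $v'(0)=1$. Thus $E$ is governed by a single autonomous Liouville-type equation. As a quick check, at $y=1$ the first coefficients of $E$ are $1,1,2,5,\dots$, matching the five size-$3$ t-shelves that avoid $\mo$.

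The equation $v''=y\,e^{v}$ admits a first integral: multiplying by $v'$ and integrating yields $(v')^2=2y\,e^{v}+1-2y$, the constant fixed by the initial data $v(0)=0,\ v'(0)=1$. Separating variables gives $z=\int_0^{v}(2y\,e^{s}+1-2y)^{-1/2}\,\mathrm ds$, and the substitution $p=\sqrt{2y\,e^{v}+1-2y}$ (so $p(0)=1$ and $E=(p^2-(1-2y))/(2y)$) reduces the integral to the elementary rational form $\int 2(p^2-(1-2y))^{-1}\,\mathrm dp$, whose value is a logarithm in $z\sqrt{1-2y}$. Inverting this logarithm expresses $p$, and hence $E=e^{v}$, as a hyperbolic function; using $\cosh\theta+1=2\cosh^2(\theta/2)$ one lands on the stated shape $E=(2y-1)/\bigl(y\cosh\theta+y\bigr)$, and imposing $p(0)=1$ forces $\cosh\beta=(y-1)/y$, i.e.\ the phase $\beta=\ln\!\bigl(\tfrac1y(y+\sqrt{1-2y}-1)\bigr)$, exactly the logarithmic term inside the $\cosh$ in the theorem.

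The main obstacle is not the ODE but pinning down the decomposition precisely: one must recognize the need for the auxiliary class $\mathcal{G}$ of left-childless-rooted avoiders, see that only the \emph{left} subtree is constrained (the right subtree is free), and weight the single new left child by $y$. Once the system is in place, the substitution $v=\ln E$ linearizes matters into the classical Liouville equation, after which the remaining work—selecting the sign of the square root, inverting the logarithm, and massaging the integration constant into the displayed $\ln(\cdots)$ form—is routine but must be carried out carefully to reproduce the exact phase shift.
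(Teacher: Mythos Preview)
Your decomposition is essentially the paper's: your auxiliary class $\mathcal{G}^\bullet$ coincides with the paper's $\mathcal{F}=\mathcal{Z}^\square\star\mathcal{E}$, and substituting $G-1=\int_0^z E\,\mathrm dt$ into your system recovers the paper's integral equation $E=1+\int_0^z E+y\int_0^z E(u)\int_0^u E(t)\,\mathrm dt\,\mathrm du$ verbatim. The only difference is that you actually solve the resulting ODE by hand via the Liouville reduction $v=\ln E$, $v''=y e^{v}$, whereas the paper simply appeals to a computer-algebra system.
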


\proof
Let $\mathcal{E}=\mathcal{B}(P)$ and $T\in \mathcal{E}$.
One of the following cases can occur.
\begin{itemize}[topsep=-0.5mm,itemsep=-1mm]
\item[$-$]   $T$ is empty.
\item[$-$]  $T$ is not empty, and its root does not have a left child. In this case,
             the right t-shelf of $T$ belongs to $\mathcal{E}$ and the set of such t-shelves $T$ is
             $\mathcal{Z}^\square \star \mathcal{E}$.
\item[$-$] The root of $T$ has a left child. In this case $T$ is obtained from a pair of t-shelves satisfying the second point above, namely one formed by the root of $T$ together
             with its right t-shelf, and the other one being the left t-shelf of $T$.
             See Figure~\ref{fig2} for an illustration of this case. So,
             $T$ is the product of two t-shelves satisfying the second point above and,
             with the smallest label belonging to the first t-shelf.
             Thus, the set of such t-shelves $T$ is $\mathcal{F}^\square \star \mathcal{F}$ where $\mathcal{F}=\mathcal{Z}^\square \star \mathcal{E}$.
\end{itemize}
Combining these cases we have
$$\mathcal{E} = \mathcal{\epsilon} + \mathcal{Z}^\square \star \mathcal{E} +
  \left( \mathcal{Z}^\square \star \mathcal{E} \right)^\square
  \star \left( \mathcal{Z}^\square \star \mathcal{E} \right),
$$
which yields the differential equation
  $$  E(z,y)  = 1 + \int_0^z \! E(t,y) \mathrm{d}t +
                                          y\cdot\int_0^z \!
                                            \left(
                                                  E(u,y)\cdot  \int_0^u \! E(t,y) \, \mathrm{d}t
                                            \right) \, \mathrm{d}u,$$
with the initial conditions $E(0,y)=1$ and $\partial_zE(z,y)|_{z=0}=1.$
A simple calculation (using Maple for instance) gives the desired result.
\endproof

  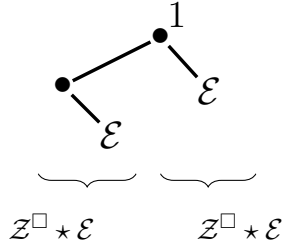
\begin{figure}[H]
\comm{
    \centering
    \begin{tikzpicture}[scale=0.65, inner sep=0pt]
  \tikzstyle{br} = [line width=1.3pt]
 \large
 \node[label={[xshift=1ex]$1$}]
        (n1) at (0,0) {$\bullet$};
  \node (n2) at (-2, -1) {$\bullet$};
  \node (A1) at ( 1, -1.1) {$\mathcal{E}$};
  \node (A2) at (-1, -2) {$\mathcal{E}$};
  \path[br]
       (n1) edge (A1) edge (n2)
       (n2) edge (A2);
   \draw [decorate,
          decoration={brace,amplitude=5pt}]
          (1.95,-2.8) -- (0.0,-2.8) node [below, xshift=30pt, yshift=-15pt]
         {\small$\mathcal{Z}^\square \star \mathcal{E}$};
   \draw [decorate,
          decoration={brace,amplitude=5pt}]
          (-0.5,-2.8) -- (-2.5,-2.8) node [below, xshift=5pt, yshift=-15pt]
         {\small$\mathcal{Z}^\square \star \mathcal{E}$};
\end{tikzpicture}
}
\caption{Illustration of a t-shelf satisfying the third case in the proof of Theorem \ref{th2}.}
\label{fig2}

  \end{figure}

The next corollary is obtained by calculating $E(z,1)$.
\begin{cor}
\label{cor21}
The exponential generating function for the set $\mathcal{B}(P)$ with respect to the size of trees is given by
$$ \mbox{Eul}(z)=\frac{1}{1-\sin z},$$
which yields a shift of the Euler numbers (sequence {\em \seqnum{A000111}} in OEIS \cite{sloa}--
not to be confused with Eulerian numbers).
\end{cor}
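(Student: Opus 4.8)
The plan is to obtain $E(z,1)$ by the simple substitution $y=1$ announced in the statement, and then to recognise the resulting series as a shift of the Euler numbers.

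First I would substitute $y=1$ directly into the closed form of Theorem \ref{th2}. The numerator $2y-1$ becomes $1$. Inside the hyperbolic cosine, the radical $\sqrt{-2y+1}$ becomes $\sqrt{-1}=i$, while the logarithm's argument $\frac{1}{y}\bigl(y+\sqrt{-2y+1}-1\bigr)$ collapses to $i$, so that $\ln$ contributes $\tfrac{i\pi}{2}$. The whole argument of $\cosh$ therefore equals $i\bigl(z+\tfrac{\pi}{2}\bigr)$. Next I would use $\cosh(i\theta)=\cos\theta$ together with $\cos\bigl(z+\tfrac{\pi}{2}\bigr)=-\sin z$, so the value of the $\cosh$ is $-\sin z$. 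Plugging this into the formula, the denominator becomes $1\cdot(-\sin z)+1=1-\sin z$, whence $E(z,1)=\frac{1}{1-\sin z}$, as claimed.

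The one point requiring care --- and the only real obstacle --- is that the naive substitution produces the imaginary quantities $\sqrt{-1}$ and $\ln i$, so one must argue that the final expression really is the value of the (real) generating function at $y=1$. This is justified because $E(z,y)$ is analytic in $y$ in a neighbourhood of $y=1$, so $E(z,1)=\lim_{y\to 1}E(z,y)$ is unambiguous and the branch choices cancel to give the real result above. To sidestep the complex intermediate values entirely, one may instead set $y=1$ in the integral equation from the proof of Theorem \ref{th2}: writing $F(z)=E(z,1)$ and $G(z)=\int_0^z F(t)\,\mathrm{d}t$, it reduces to the system $F'=F(1+G)$, $G'=F$, with initial data $F(0)=1$, $F'(0)=1$. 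A direct check confirms that $F(z)=\frac{1}{1-\sin z}$ satisfies $F'=F^2\cos z=F(1+G)$ and the initial conditions, and uniqueness of the solution of this first-order system identifies it as $E(z,1)$.

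Finally, to see that the coefficients give a shift of the Euler (zigzag) numbers \seqnum{A000111}, I would record the identity
$$
\frac{1}{1-\sin z}=\frac{1+\sin z}{\cos^2 z}=\sec^2 z+\sec z\tan z=\frac{\mathrm{d}}{\mathrm{d}z}\bigl(\tan z+\sec z\bigr).
$$
Since $\tan z+\sec z$ is the classical exponential generating function of the Euler numbers, differentiation shifts its coefficient sequence by one, which is exactly the asserted shift; checking the first few coefficients $1,1,2,5,\dots$ of $\frac{1}{1-\sin z}$ against the tabulated values of \seqnum{A000111} confirms the match.
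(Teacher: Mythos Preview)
Your proposal is correct and follows the same approach as the paper, which merely states that the corollary ``is obtained by calculating $E(z,1)$'' without further detail. You have carried out this substitution explicitly, handled the complex intermediate values carefully (and supplied a clean real-analytic alternative via the differential equation), and additionally verified the identification with the shifted Euler numbers through the identity $\frac{1}{1-\sin z}=\frac{\mathrm{d}}{\mathrm{d}z}(\tan z+\sec z)$ --- all of which goes beyond what the paper provides.
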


\begin{cor}
\label{cor22} The popularity $pe_n$ of the left children among the set $\mathcal{B}(P)$ is given by the exponential generating function
$$ PE(z)=
{\frac {-\sin z +1+ \left( z-1 \right) \cos z
}{ \left( 1-\sin z \right) ^{2}}
}.
$$
Moreover, the coefficient $pe_n$ of $\frac{z^n}{n!}$ satisfies $pe_n=(n+1)e_n-e_{n+1}$ where $e_n$ is the shifted Euler number defined by the e.g.f. $\mbox{Eul}(z)=\frac{1}{1-\sin(z)}$.
 The asymptotic of $pe_n$ is given by  $$\frac{8 (\pi - 2)}{\pi^3} n^2 \left( \frac{2}{\pi} \right)^n.$$
\end{cor}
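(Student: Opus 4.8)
The plan is to follow the same three-step scheme used for the pattern \protect\moo{} in Corollary~\ref{cor3}. The first step is to recognize that the popularity generating function is $PE(z)=\partial_y E(z,y)|_{y=1}$, where $E(z,y)$ is the bivariate e.g.f.\ of Theorem~\ref{th2}. The only delicate point is that at $y=1$ the quantity $\sqrt{-2y+1}$ becomes imaginary, so one cannot substitute $y=1$ directly into the closed form. However $E(z,y)$ is analytic in $y$ near $y=1$ (its $z$-coefficients are polynomials in $y$), and the apparent branch point of $\sqrt{-2y+1}$ at $y=\tfrac12$ does not obstruct analyticity at $y=1$, where $\cosh$ of an imaginary argument is simply a cosine. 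Rewriting the closed form of Theorem~\ref{th2} in terms of real trigonometric functions for $y>\tfrac12$ (equivalently, differentiating it as a genuine complex-analytic function of $y$) and then setting $y=1$ yields, after simplification, $PE(z)=\frac{-\sin z+1+(z-1)\cos z}{(1-\sin z)^2}$; as a consistency check, the same substitution applied to $E$ itself returns $E(z,1)=\frac{1}{1-\sin z}=\mathrm{Eul}(z)$ of Corollary~\ref{cor21}. A cleaner alternative that avoids the awkward closed form is to differentiate the integral equation of Theorem~\ref{th2} with respect to $y$, set $y=1$, and solve the resulting linear ODE using $E(z,1)=\mathrm{Eul}(z)$.

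For the recurrence I would establish the operator identity $(z-1)\,\mathrm{Eul}'(z)+\mathrm{Eul}(z)=PE(z)$, exactly parallel to the Bell case. Since $\mathrm{Eul}(z)=\frac{1}{1-\sin z}$ and $\mathrm{Eul}'(z)=\frac{\cos z}{(1-\sin z)^2}$, a one-line computation gives
$$(z-1)\frac{\cos z}{(1-\sin z)^2}+\frac{1}{1-\sin z}=\frac{(z-1)\cos z+1-\sin z}{(1-\sin z)^2}=PE(z).$$
Extracting the coefficient of $\frac{z^n}{n!}$ on the left then yields the claimed recurrence: writing $\mathrm{Eul}(z)=\sum_n e_n\frac{z^n}{n!}$, the term $z\,\mathrm{Eul}'$ contributes $n\,e_n$, the term $-\mathrm{Eul}'$ contributes $-e_{n+1}$, and $\mathrm{Eul}$ contributes $e_n$, for a total of $pe_n=(n+1)e_n-e_{n+1}$.

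The asymptotics I would obtain by singularity analysis of the meromorphic function $PE(z)$. Its dominant singularity is the zero of $1-\sin z$ closest to the origin, namely $z_0=\tfrac{\pi}{2}$ (the next zeros lie at $\tfrac{\pi}{2}\pm 2\pi$ and are therefore exponentially subdominant). Setting $u=z-z_0$ and using $\sin z=\cos u$, $\cos z=-\sin u$, the denominator satisfies $(1-\sin z)^2=\tfrac{u^4}{4}\bigl(1+O(u^2)\bigr)$ while the numerator expands as $-(\tfrac{\pi}{2}-1)u-\tfrac{u^2}{2}+O(u^3)$; the numerator's simple zero removes one order, so $PE$ has a pole of order three whose most singular term is $\frac{2(\pi-2)}{(z_0-z)^3}$. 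Applying the transfer $[z^n](z_0-z)^{-3}=\binom{n+2}{2}z_0^{-n-3}$ with $z_0^{-1}=\tfrac{2}{\pi}$ gives $[z^n]PE(z)\sim\frac{8(\pi-2)}{\pi^3}(n+1)(n+2)\bigl(\tfrac{2}{\pi}\bigr)^n$, whose leading behaviour is the stated $\frac{8(\pi-2)}{\pi^3}\,n^2\bigl(\tfrac{2}{\pi}\bigr)^n$.

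The main obstacle is the first step. Because $\sqrt{-2y+1}$ is imaginary at $y=1$, the derivative $\partial_y E|_{y=1}$ must be taken either after rewriting $E$ through real trigonometric functions or by treating the closed form as a complex-analytic function of $y$, and the simplification down to the compact expression for $PE(z)$ is where a computer algebra system is convenient; this is exactly the place where a naive evaluation would fail. By contrast, the recurrence in the second step reduces to a short algebraic identity and a coefficient comparison, and the asymptotic analysis in the third step is routine once the local expansion pinning down the order-three pole at $\tfrac{\pi}{2}$ and its leading coefficient has been carried out with care.
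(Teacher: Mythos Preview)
Your proposal is correct and follows exactly the paper's own three-step scheme: compute $PE(z)=\partial_y E(z,y)|_{y=1}$, derive the recurrence from the identity $PE(z)=(z-1)\,\mathrm{Eul}'(z)+\mathrm{Eul}(z)$, and obtain the asymptotics by singularity analysis at $z_0=\pi/2$. You simply supply much more detail than the paper's terse proof (notably the careful handling of the imaginary $\sqrt{1-2y}$ and the explicit local expansion yielding the order-three pole), but the approach is the same.
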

\noindent
(The first terms of $pe_n$, $n\geq 2$, are $1,4,19,94,519,3144,20903,151418$).

\begin{proof} Using Theorem~\ref{th2}, $PE(z)$ is obtained by calculating $\partial_y E(z,y)|_{y=1}$. The recurrence relation is directly obtained with the relation
$PE(z)=(z-1)\partial_z \mbox{Eul}(z)+\mbox{Eul}(z)$, and
the asymptotic follows from the classical singularity analysis (see for instance \cite{fla}).
\end{proof}


\subsection{Pattern \protect\mooo}

We conclude this section by considering the pattern $P=\mooo$ and the set $\mathcal{B}^\bullet(P)$ of
non-empty t-shelves avoiding $P$.

\begin{thm}
\label{th3}
Let $P$ be the pattern $\mooo$. Then the bivariate e.g.f. for
$\mathcal{B}^{\bullet}(P)$ with respect to the size of t-shelves and the number of left children is given by
  $$G(z,y)=\frac{-2}{1+y-\sqrt {{y}^{2}+1}\coth \left( \frac{z\sqrt {{y}^{2}+1}}{2}
 \right)}.$$
\end{thm}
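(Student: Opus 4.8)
The plan is to mimic the symbolic decompositions used for Theorems~\ref{th1} and~\ref{th2}: set $\mathcal{G}=\mathcal{B}^\bullet(P)$ and classify a non-empty $T\in\mathcal{G}$ according to the children of its root. Since $P=\mooo$ is a pattern involving a single node together with its two immediate children, and since avoidance means that no node has its left child carrying a smaller label than its right child, the only new constraint created when $T$ is built from its root and subtrees is a relation between the two children of the root; the left and right subtrees themselves need only lie in $\mathcal{G}$. This yields four cases: (i) the root is isolated, contributing $\mathcal{Z}$; (ii) the root has a right child only, contributing $\mathcal{Z}^\square\star\mathcal{G}$ with no new left child; (iii) the root has a left child only, contributing $\mathcal{Z}^\square\star\mathcal{G}$ and creating one left child; and (iv) the root has both children, creating one left child and subject to the avoidance constraint relating them.

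The heart of the argument is case (iv). Because the root carries the smallest label of $T$, its left child is the minimum label of the left subtree and its right child is the minimum label of the right subtree; hence the avoidance requirement, namely that the left child exceed the right child, is equivalent to demanding that the overall smallest non-root label be placed in the \emph{right} subtree. In the labeled product $\mathcal{G}\star\mathcal{G}$ of the two subtrees, swapping the two components is a bijection between labelings that place the least label on the left and those that place it on the right, so exactly half of all labelings satisfy the constraint. Consequently the both-children case contributes $\tfrac12\,G^2$ to the generating function, weighted by $y$ for the newly created left child.

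Marking each created left child by $y$ and using that the boxed product $\mathcal{Z}^\square\star\mathcal{A}$ has exponential generating function $\int_0^z A(t)\,\mathrm{d}t$ (so that $\partial_z$ peels off the root bearing the least label), the four cases assemble, uniformly written as $\mathcal{Z}^\square$ applied to the attached structure, into the differential equation
\[
\partial_z G(z,y) = 1 + (1+y)\,G(z,y) + \tfrac{y}{2}\,G(z,y)^2,\qquad G(0,y)=0.
\]
This is an autonomous Riccati equation in $z$; its right-hand side is a quadratic in $G$ whose discriminant is $(1+y)^2-2y=y^2+1$, which is exactly the quantity $\sqrt{y^2+1}$ appearing in the claimed formula. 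Separating variables and integrating (for instance with Maple), then fixing the constant of integration through $G(0,y)=0$, yields the $\coth$-shaped solution $G(z,y)=\dfrac{-2}{1+y-\sqrt{y^2+1}\,\coth\!\left(\tfrac{z\sqrt{y^2+1}}{2}\right)}$.

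I expect the main obstacle to be the careful justification of the factor $\tfrac12$ in case (iv): verifying that the avoidance condition on the root's two children is precisely a ``the minimum goes to the right'' condition, and that by the component-swap symmetry this halves the labeled product $\mathcal{G}\star\mathcal{G}$. Once the Riccati equation is in hand, the final integration and the matching of the boundary condition are routine, paralleling the closing steps of Theorem~\ref{th2}.
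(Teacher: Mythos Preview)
Your proposal is correct and follows essentially the same root-decomposition as the paper's own proof. The only cosmetic difference is in case~(iv): the paper encodes the ``minimum non-root label goes to the right subtree'' constraint via the boxed product $\mathcal{G}^\square\star\mathcal{G}$ (leading to a double-integral equation with the two initial conditions $G(0,y)=0$, $\partial_zG(0,y)=1$), whereas you invoke the swap symmetry to get the equivalent contribution $\tfrac12\,G^2$ and hence the first-order Riccati equation directly.
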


\begin{proof}
For $P=\mooo$, a non-empty t-shelf $T$ in $\mathcal{G}=\mathcal{B}^{\bullet}(P)$ is in one of the following cases:

\begin{itemize}[topsep=-0.5mm,itemsep=-1mm]
\item[$-$]   $T$ is reduced to one (root) node.
\item[$-$]  $T$ has at least two nodes and its root does not have a left child. In this case
             the set of such t-shelves $T$ is $\mathcal{Z}^\square \star \mathcal{G}$.
\item[$-$] $T$ has at least two nodes and its root does not have a right child. As above,
             the set of such t-shelves $T$ is $\mathcal{Z}^\square \star \mathcal{G}$.
\item[$-$]  The root of $T$ has both left and right children (see Figure~\ref{fig3}).
             In this case $T$ is obtained from a pair of t-shelves in $\mathcal{G}$ connected by a common root,
             with the smallest label of $T$ in its right t-shelf. The set of such $T$ is
             $\mathcal{Z}^\square \star (\mathcal{G}^\square\star\mathcal{G})$.
\end{itemize}
Combining these four cases we obtain
 $$  \mathcal{G}= \mathcal{Z} +  \mathcal{Z}^\square \star \mathcal{G}+\mathcal{Z}^\square \star \mathcal{G} +  \mathcal{Z}^\square \star
                                     \left( \mathcal{G}^\square
                                                                       \star \mathcal{G}
                                                                \right)$$
which induces the differential equation
$$
G(z,y) = z +  \int_0^z \! G(t,y) \, \mathrm{d}t +y\cdot\int_0^z \! G(t,y) \, \mathrm{d}t+
  y\cdot\int_0^z \! \int_0^u \!   \partial_t G(t,y)\cdot G(t,y)  \, \mathrm{d}t \, \mathrm{d}u,
  $$
  with the initial conditions $G(0,y)=0$ and $\partial_z G(z,y)|_{z=0}=1$.
  A simple calculation (using Maple for instance) gives the desired result.
  \end{proof}

\medskip

  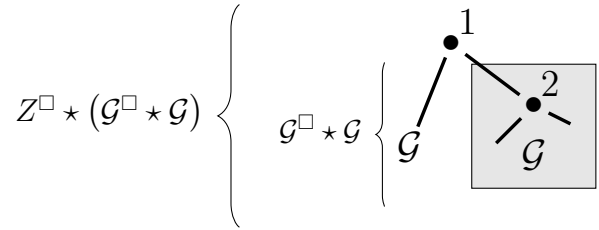
\begin{figure}[H]
\comm{
    \centering
    \begin{tikzpicture}[scale=0.55, inner sep=0pt, shorten >= 2pt, shorten <= 2pt]
  \tikzstyle{br} = [line width=1.3pt]
  \large
  \draw[fill=black!10!white]
       (0.5,-0.5) rectangle (3.5,-3.5);
  \node[label={[xshift=1ex]$1$}]
        (n1) at (0,0) {$\bullet$};
  \node[label={[xshift=1ex]$2$}]
        (n2) at (2, -1.5) {$\bullet$};
  \node (B1) at ( -1, -2.5) {$\mathcal{G}$};
  \node (B2) at (2, -2.7) {$\mathcal{G}$};
   \path[br]
        (n1) edge (B1) edge (n2)
        (n2) edge (1,-2.5)
        (n2) edge (3,-2);
    \draw [decorate,
           decoration={brace,amplitude=5pt}]
           (-1.5,-4) -- (-1.5,-0.4) node [left, xshift=-10pt, yshift=-26pt]
          {\small$\mathcal{G}^\square \star \mathcal{G}$};
    \draw [decorate,
           decoration={brace,amplitude=10pt}]
           (-5,-4.5) -- (-5, 1) node [left, xshift=-15pt, yshift=-42pt]
          {\normalsize$Z^\square \star \left( \mathcal{G}^\square \star \mathcal{G} \right)$};
\end{tikzpicture}
}
    \caption{Illustration of a t-shelf satisfying fourth case in the proof of Theorem~\ref{th3}.}
    \label{fig3}
  \end{figure}

\begin{cor}
  \label{cor31}
  The exponential generating function for the set $\mathcal{B}(P)$ with respect to the size of t-shelves is given by
  $$1+\frac{-2}{-\sqrt {2}\coth \left( \frac{z}{\sqrt {2}} \right) +2 }$$
  which generates the sequence {\rm \seqnum{A131178}} in OEIS \cite{sloa}.
\end{cor}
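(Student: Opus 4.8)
The plan is to derive this univariate e.g.f. by specializing the bivariate generating function $G(z,y)$ of Theorem~\ref{th3} at $y=1$, in exact analogy with the way Corollaries~\ref{cor2} and~\ref{cor21} followed from Theorems~\ref{th1} and~\ref{th2}. The first thing I would record is that, because $G(z,y)$ enumerates only the \emph{non-empty} t-shelves avoiding $P=\mooo$ (in Theorem~\ref{th3} one has $\mathcal{G}=\mathcal{B}^\bullet(P)$), the generating function for the full class $\mathcal{B}(P)=\epsilon+\mathcal{B}^\bullet(P)$, which in addition counts the empty t-shelf, is $1+G(z,y)$. Setting $y=1$ then discards the bookkeeping for left children and leaves the desired univariate series.

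Next I would carry out the substitution. At $y=1$ we have $\sqrt{y^2+1}=\sqrt{2}$, so that $\frac{z\sqrt{y^2+1}}{2}=\frac{z\sqrt{2}}{2}=\frac{z}{\sqrt{2}}$, while $1+y=2$. Hence
$$G(z,1)=\frac{-2}{2-\sqrt{2}\coth\!\left(\frac{z}{\sqrt{2}}\right)},$$
and adding the contribution of the empty t-shelf yields
$$1+G(z,1)=1+\frac{-2}{-\sqrt{2}\coth\!\left(\frac{z}{\sqrt{2}}\right)+2},$$
which is precisely the claimed closed form.

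Finally, to confirm that this function generates the sequence \seqnum{A131178}, I would expand $1+G(z,1)$ as a power series about $z=0$. Although $\coth$ has a simple pole there, the Laurent tail $\coth(x)=x^{-1}+x/3-x^3/45+\cdots$ makes the pole in the denominator cancel, so $1+G(z,1)$ is a genuine power series whose coefficients $n!\,[z^n]\bigl(1+G(z,1)\bigr)=|\mathcal{B}_n(P)|$ are nonnegative integers; comparing the first several of these against the OEIS entry completes the identification. I expect no conceptual difficulty anywhere in this argument: the substitution is automatic, and the only substantive step is the closing numerical match with \seqnum{A131178}, which is a routine series computation (readily performed in Maple, as the authors do elsewhere).
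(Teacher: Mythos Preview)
Your proposal is correct and follows exactly the paper's (implicit) approach: the corollary is obtained immediately from Theorem~\ref{th3} by setting $y=1$ in $G(z,y)$ and adding $1$ to account for the empty t-shelf. The paper does not even write out a proof for this corollary, treating the specialization as self-evident, so your derivation is precisely what the authors have in mind.
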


\begin{cor}
  \label{cor32} The popularity  $pg_n$ of the left children among the set $\mathcal{B}(P)$ is given by the exponential generating function
  $$ PG(z)={\frac {{e}^{\sqrt {2}z} \left( 4\,z-4 \right) - \left( \sqrt {2}-2
 \right) {e}^{2\,\sqrt {2}z}+\sqrt {2}+2}{ \left(  \left( \sqrt {2}-2
 \right) {e}^{\sqrt {2}z}+2+\sqrt {2} \right) ^{2}}}.
$$
Moreover, the asymptotic of the coefficient $pg_n$ of $\frac{z^n}{n!}$ is given by $$
n \left(\frac{\sqrt{2}}{\ln{\left (2 \sqrt{2} + 3 \right )}}\right)^{n+1}.
$$
\end{cor}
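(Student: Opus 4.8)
The plan is to mirror the proofs of Corollaries~\ref{cor3} and~\ref{cor22}: first extract the popularity e.g.f.\ from the bivariate function $G(z,y)$ of Theorem~\ref{th3} by differentiating in the marking variable, and then obtain the coefficient asymptotics by singularity analysis.

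First I would set $PG(z)=\partial_y G(z,y)\big|_{y=1}$. Because $y$ marks left children in $G$, this derivative is precisely the e.g.f.\ whose $\tfrac{z^n}{n!}$-coefficient is the total number of left children over all size-$n$ t-shelves of $\mathcal{B}(P)$. Putting $y=1$ replaces $\sqrt{y^2+1}$ by $\sqrt2$, and substituting $\coth\!\bigl(\tfrac{z\sqrt2}{2}\bigr)=\tfrac{e^{\sqrt2 z}+1}{e^{\sqrt2 z}-1}$ makes the expression rational in $e^{\sqrt2 z}$; carrying out the differentiation (for instance with Maple, as the authors do throughout) and clearing denominators gives the stated
$$PG(z)=\frac{e^{\sqrt2 z}(4z-4)-(\sqrt2-2)e^{2\sqrt2 z}+\sqrt2+2}{\bigl((\sqrt2-2)e^{\sqrt2 z}+2+\sqrt2\bigr)^2}.$$
This part is purely mechanical; the only care needed is to keep the $\coth$-to-exponential rewriting consistent between numerator and denominator.

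For the asymptotics I would apply the classical singularity analysis for meromorphic functions \cite{fla} to $PG(z)$. Writing the denominator as $D(z)^2$ with $D(z)=(\sqrt2-2)e^{\sqrt2 z}+2+\sqrt2$, its zeros solve $e^{\sqrt2 z}=\tfrac{2+\sqrt2}{2-\sqrt2}=3+2\sqrt2$, hence lie at $z=\tfrac{\ln(2\sqrt2+3)}{\sqrt2}+\tfrac{2\pi i k}{\sqrt2}$, $k\in\mathbb{Z}$. The one of smallest modulus is the positive real $z_0=\tfrac{\ln(2\sqrt2+3)}{\sqrt2}$, and all others have strictly larger modulus, so $z_0$ is the unique dominant singularity. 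Since $D'(z_0)=(\sqrt2-2)\sqrt2\,(3+2\sqrt2)\neq0$, $z_0$ is a simple zero of $D$, hence a double zero of the denominator, and a direct check shows the numerator does not vanish at $z_0$; thus $PG$ has a double pole there. The leading singular coefficient is $A=\lim_{z\to z_0}(z-z_0)^2 PG(z)=N(z_0)/D'(z_0)^2$, and evaluating with $e^{\sqrt2 z_0}=3+2\sqrt2$ produces the clean cancellation $A=z_0$. From $[z^n]\tfrac{A}{(z-z_0)^2}=A(n+1)z_0^{-n-2}$ together with $A=z_0$ we obtain
$$[z^n]PG(z)\sim (n+1)z_0^{-n-1}\sim n\left(\frac{\sqrt2}{\ln(2\sqrt2+3)}\right)^{n+1},$$
the stated estimate.

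The main obstacle is bookkeeping rather than conceptual: the derivative $\partial_y G\big|_{y=1}$ is cumbersome to carry out by hand, and the delicate point in the asymptotic argument is to confirm that $z_0$ is genuinely the dominant singularity — that the remaining zeros of $D$ are farther from the origin and that the numerator does not cancel the double pole — so that the double pole contributes the linear factor $n$. Once $z_0$, the pole order two, and the constant $A=z_0$ are established, the estimate is immediate.
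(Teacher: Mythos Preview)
Your proposal is correct and follows exactly the approach sketched in the paper: compute $PG(z)=\partial_y G(z,y)|_{y=1}$ from Theorem~\ref{th3}, then apply classical singularity analysis. You actually carry out the singularity analysis in detail (locating the dominant double pole at $z_0=\ln(2\sqrt2+3)/\sqrt2$ and evaluating the leading constant $A=z_0$), whereas the paper only asserts that it ``follows from the classical singularity analysis''; but the method is the same.
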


\noindent
(The first terms of $pg_n$, $n\geq 2$, are $1,5,24,128,770,5190,38864,320704$.)

\begin{proof} Using Theorem~\ref{th3}, $PG(z)$ is obtained by calculating $\partial_y G(z,y)|_{y=1}$,
and the asymptotic follows from the classical singularity analysis.
\end{proof}

  \begin{table}[ht!]
    \centering
    $ \begin{array}{c|l|l|l}
      \mbox{Pattern } P& \mbox{Sequence counting $\mathcal{B}(P)$} & \mbox{OEIS \cite{sloa}}  & \mbox{Comments} \\\hline\hline

       \moo &
      1, 1, 2, 5, 15, 52, 203,  877,  4140, 21147, ... &
      \seqnum{A000110}~(\mbox{Bell})              &
      \mbox{Cor.~\ref{cor2} and Th. \ref{new_th_bij}}
      \\\hline
      \mo &
      1, 1, 2, 5, 16, 61, 272, 1385,  7936, 50521,  ...     &
      \seqnum{A000111} ~(\mbox{Euler})                 &
      \mbox{Cor.~\ref{cor21} and Th.~\ref{th6}}
      \\\hline
      \mooo  &
      1, 1, 2, 5, 16, 64, 308, 1730, 11104, 80176,  ...     &
      \seqnum{A131178}                  &
       \mbox{Cor.~\ref{cor31} and Th.~\ref{th4}}
      \\\hline
    \end{array}$
    \caption{Number of t-shelves avoiding the pattern $P$.}
    \label{results}
\end{table}

  \begin{table}[ht!]
    \centering
    $ \begin{array}{c|l|l}
      \mbox{Pattern } P& \mbox{Popularity of left children in $\mathcal{B}(P)$}  & \mbox{Comments} \\\hline\hline

       \moo &
      1, 5, 23, 109, 544, 2876, 16113, ... &

      \mbox{Corollary \ref{cor3}}
      \\\hline
       \mo &
      1, 4, 19, 94, 519, 3144, 20903, 151418,  ...     &

      \mbox{Corollary \ref{cor22}}
      \\\hline
      \mooo  &
      1,5,24,128,770,5190,38864,320704,  ...     &

       \mbox{Corollary \ref{cor32}}
      \\\hline
    \end{array}$
    \caption{Popularity of left children among t-shelves avoiding the pattern $P$.
      None of these sequences is yet recorded in OEIS \cite{sloa}.}
    \label{results2}
\end{table}

\section{Constructive bijections}
\label{bijections}

The counting sequences for t-shelves avoiding a pattern of length $3$
given in Corollaries \ref{cor2}, \ref{cor21} and \ref{cor31}
are known (see Table \ref{results}), and these results deserve bijective proofs.
Here, for each pattern $P\in\{\moo\,,\mo\,,\mooo\,\}$, we give an explicit
bijection between $\mathcal{B}(P)$ and a simpler combinatorial class.
These results are stated in the next three theorems, the first two of them are straightforward.

\begin{thm}
\label{new_th_bij}
There is a bijection between the set of partitions of $\{1,2,\ldots,n\}$ and
the set $\mathcal{B}_n(P)$ of t-shelves of size $n$ avoiding the pattern $P=\moo$.
\end{thm}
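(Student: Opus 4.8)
The plan is to give an explicit bijection $\phi$ built from the structural shape of $\moo$-avoiders, together with an inverse $\psi$, and to verify that they are mutually inverse. First I would pin down the shape of a treeshelf $T$ avoiding $\moo$. Since $\moo$ is precisely a left child that itself carries a right child, avoidance is equivalent to the statement that no left child has a right child. Consequently, any right child must be attached to a node that is not a left child, i.e.\ to the root or to another right child; by induction every right child lies on the \emph{right spine} $r_1=\text{root},\,r_2,\,\ldots,\,r_m$ (with $r_{i+1}$ the right child of $r_i$). Tracing a root-to-node path, as soon as it takes a left step it reaches a left child, which has no right child and hence only left descendants. Thus every node off the spine lies on a pure left-chain hanging from some $r_i$, and $T$ is exactly a right spine with a (possibly empty) left-chain dangling at each spine node. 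This is the shape of Figure~\ref{bells} and the combinatorial content of the decomposition $\mathcal{C}=\epsilon+\mathcal{D}^\square\star\mathcal{C}$ from Theorem~\ref{th1}, where $\mathcal{D}$ is a single left-chain.

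Next I would define $\phi(T)$ to be the partition with blocks $B_i=\{r_i\}\cup\{\text{labels on the left-chain at }r_i\}$ for $i=1,\ldots,m$. Because $T$ is increasing, $r_i=\min B_i$ and $r_1<r_2<\cdots<r_m$, so the spine records the blocks listed by increasing minima and the left-chain at $r_i$ lists $B_i\setminus\{r_i\}$ in increasing order. The inverse $\psi$ takes a partition, sorts its blocks by their minima into $B_1,\ldots,B_m$, builds the spine $r_i=\min B_i$ with $r_{i+1}$ attached as the right child of $r_i$, and hangs from each $r_i$ the left-chain obtained by listing $B_i\setminus\{r_i\}$ increasingly. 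I would then check that $\psi(\pi)$ is a genuine increasing treeshelf avoiding $\moo$: its root is labeled $1$ since $\min B_1=1$; labels increase along the spine and along each chain because the minima are increasing and each block is sorted; and every left child, being a chain node, has no right child by construction.

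The routine verification that $\phi\circ\psi$ and $\psi\circ\phi$ are the identity then completes the proof. The one point requiring genuine care --- the main (and only modest) obstacle --- is establishing the structural rigidity cleanly: that every node off the right spine lies on a pure left-chain, that the increasing-tree condition forces a \emph{unique} arrangement inside each block (namely sorted increasingly down the chain), and that it forces a \emph{unique} spine order (blocks by increasing minima). Once this rigidity is in place, both maps are manifestly well defined and inverse to one another. Equivalently, one may argue recursively in lockstep with the enumeration: on the treeshelf side peel off the left-chain $\mathcal{D}$-component through the smallest label, and on the partition side peel off the block containing the smallest element, which is exactly the boxed-product step $\mathcal{D}^\square\star\mathcal{C}$ underlying $\mbox{Bell}(z)=e^{e^{z}-1}$ in Corollary~\ref{cor2}.
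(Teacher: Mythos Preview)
Your proposal is correct and follows essentially the same construction as the paper: the right spine records the block minima in increasing order and the left-chain at each spine node records the remaining elements of that block, which is exactly the paper's recursive description (root $=\min S_1$, left t-shelf the left-chain on $S_1\setminus\{\min S_1\}$, right t-shelf built from $S_2,\ldots,S_k$). You are simply more explicit than the paper in spelling out both directions and in justifying the structural rigidity, whereas the paper gives only the partition-to-treeshelf direction and asserts reversibility.
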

\proof
For a partition $S_1,S_2,\ldots,S_k$ of a set $S\subseteq\{1,2,\ldots,n\}$ with $\min S_1<\min S_2<\ldots<\min S_k$
we define a t-shelf $T$ with nodes labeled by integers from $S$. If $k=1$, then $T$ is simply the t-shelf with
no right children (and with labels in $S=S_1$). Elsewhere, $T$ is defined recursively as:
\begin{itemize}[topsep=-0.5mm,itemsep=-1mm]
\item[$-$] the root of $T$ is labeled by $\min S_1$,
\item[$-$] the left t-shelf of $T$ has size equal to $\mathrm{card}\, S_1-1$ and
does not have a right children; its nodes are labeled by integers in $S_1\setminus\{\min S_1\}$,
\item[$-$] the right t-shelf of $T$ is obtained recursively from the partition $S_2,\ldots,S_k$ of
$S\setminus S_1$.
\end{itemize}
Clearly, the t-shelf $T$ corresponding to a set partition of $\{1,2,\ldots,n\}$ is
a size $n$ t-shelf avoiding $P$.
See the recursive definition of $\mathcal{B}(P)$ in the proof of Theorem \ref{th1} and the shape of $T$ given in Figure \ref{bells}. This construction is reversible, and the statement holds
\endproof

{\it Unordered binary  increasing trees} are the non-ordered counterpart of t-shelves:
in an unordered binary increasing tree
the sibling nodes are not longer ordered among themselves, and nodes with no sibling are
not `labeled' left/right.
Thus each unordered binary increasing tree
$T$ can be expanded into $2^k$ different t-shelves of same size, where $k$ is the number of nodes of $T$ having at least one child. Theorem \ref{th6} below establishes a bijection between size $n+1$ unordered binary increasing trees
and size $n$  t-shelves avoiding $\mo$.
An interesting intermediate ordered/unordered combinatorial class is that of binary increasing trees
where, as above, the sibling nodes are not ordered, but nodes with no sibling are still `labeled' left/right. We denote by $\mathcal{J}$ the set of these trees.


\medskip

We define a transformation $\phi$ acting on unordered binary increasing trees and on trees in $\mathcal{J}$
by ordering the nodes having a sibling: if a node of a tree has two children, then we consider the
child with the smaller label as the right one (and thus, that with the larger label as the left one).
This configuration is depicted below.

$$
  \raisebox{-0.37\height}{%
  \begin{tikzpicture}[scale = 0.4, inner sep=1pt]

    \node (n1) at (1,1) {$y$};
    \node (n2) at (2,3) {$u$};
    \node (n3) at (3,2) {$z$};

    \path
    (n2) edge (n1) (n2) edge (n3);
  \end{tikzpicture}}, \; \text{when } z < y.$$

\medskip

Clearly, applying $\phi$ on a tree in $\mathcal{J}$ a t-shelf avoiding $\mooo$ is obtained.
Moreover, this transformation is reversible, and since $\mathcal{J}$ is counted by the sequence
\seqnum{A131178} in \cite{sloa} (see the references therein),
the next theorem gives a constructive proof for the counting sequence of t-shelves avoiding $\mooo$.

\begin{thm}
\label{th4}
There is a bijection between the set $\mathcal{J}$ and the set $\mathcal{B}(P)$ of t-shelves
avoiding the pattern $P=\mooo$.
\end{thm}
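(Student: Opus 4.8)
The plan is to realize the bijection as the pair $(\phi,\psi)$ and verify that the two maps compose to the identity in both directions. The excerpt already observes that $\phi$ carries a tree in $\mathcal{J}$ to a t-shelf avoiding $\mooo$, so the remaining task is to define an inverse $\psi\colon\mathcal{B}(P)\to\mathcal{J}$ and check $\phi\circ\psi=\mathrm{id}$ and $\psi\circ\phi=\mathrm{id}$. I would take $\psi$ to be the forgetful map: given a t-shelf $T$ avoiding $\mooo$, at every node having two children erase the left/right labels of those two children (so the two siblings become unordered), while at every node having exactly one child keep its left/right designation. The first thing to confirm is that $\psi(T)\in\mathcal{J}$: the underlying rooted tree, the labels, and the increasing property along branches are all untouched, and only the ordering of two-children siblings is forgotten, which is exactly the structure allowed in $\mathcal{J}$ (siblings unordered, singletons still oriented).

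Next I would verify the two compositions. For $\psi\circ\phi=\mathrm{id}$ on $\mathcal{J}$: applying $\phi$ orients each two-children node by the rule ``smaller label on the right, larger label on the left,'' and $\psi$ then forgets precisely this orientation, returning the original tree, while the one-child designations are preserved throughout. For $\phi\circ\psi=\mathrm{id}$ on $\mathcal{B}(P)$: let $T\in\mathcal{B}(P)$. Since $T$ avoids $\mooo$, at every node of $T$ with two children the label of the left child exceeds that of the right child. Hence when $\psi$ forgets and $\phi$ reimposes the orientation by the same ``larger-left'' rule, the original orientation is restored at every branching node, and one-child nodes are left unchanged. Thus $\phi$ and $\psi$ are mutually inverse, and the claimed bijection follows.

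The single point that must be stated carefully — and hence the crux of the argument — is the equivalence that a t-shelf avoids $\mooo$ if and only if at every two-children node the larger label sits on the left. This is exactly the canonical orientation imposed by $\phi$, and it is what makes the orientation data forgotten by $\psi$ uniquely recoverable on $\mathcal{B}(P)$; everything else is the routine observation that neither $\phi$ nor $\psi$ alters labels or parent–child incidences. I do not expect a genuine obstacle here: once the avoidance condition is matched to the orientation rule of $\phi$, the bijection is essentially ``forget the orientation at branchings, then reimpose it canonically,'' and the proof reduces to the bookkeeping above.
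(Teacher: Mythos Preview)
Your proposal is correct and follows essentially the same approach as the paper: the paper's proof is simply the observation that $\phi$ (order two-children nodes so the smaller label goes right) lands in $\mathcal{B}(P)$ and is reversible, and you have made that reversibility explicit by writing down the forgetful inverse $\psi$ and checking both compositions. The only difference is that you supply the details the paper leaves implicit.
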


\bigskip

In order to obtain a bijection between unordered binary increasing trees and t-shelves that avoid $\mo$ (next theorem),
we apply the shift (defined below) on unordered binary increasing trees in standard representation.
The {\it standard representation} of such a tree is the t-shelf obtained after ordering sibling nodes,
which is obtained by performing the above transformation $\phi$, together with considering as right child
each node with no sibling, as depicted below.
  $$
  \raisebox{-0.37\height}{%
  \begin{tikzpicture}[scale = 0.4, inner sep=1pt]

    \node (n1) at (1,1) {$u$};
    \node (n2) at (2,0) {$z.$};

    \path
    (n2) edge (n1);
  \end{tikzpicture}}$$

The {\it shift} of a node $y$ of a t-shelf has effect only if the following conditions are satisfied:
(i) $y$ is a left child and it has a right sibling, say $z$; and
(ii) $z$ in turn does not have a left child and its label is smaller than that of $y$.
Otherwise the shift has no effect.
With this notation, the shift of a node $y$
satisfying the two conditions above consists of pruning $y$ from its parent and grafting it as the left child of $z$, see Figure \ref{shi}.

\begin{figure}[H]
 \centering \raisebox{-0.37\height}{%
  \begin{tikzpicture}[scale = 0.6, inner sep=1pt]
    \node (n1) at (1,1) {$y$};
    \node (n2) at (2,3) {$\bullet$};
    \node (n3) at (3,2) {$z$};

    \path
    (n2) edge (n1) (n2) edge (n3);
  \end{tikzpicture}}
  $\xmapsto{\text{shift}}$
  \raisebox{-0.37\height}{%
  \begin{tikzpicture}[scale = 0.6, inner sep=1pt]
    \node (n1) at (1,1) {$y$};
    \node (n2) at (2,3) {$\bullet$};
    \node (n3) at (3,2) {$z$};

    \path
    (n2) edge (n3) (n1) edge (n3);
  \end{tikzpicture}}
\caption{The shift operation. The label of $z$ is smaller than that of $y$.}
\label{shi}
\end{figure}
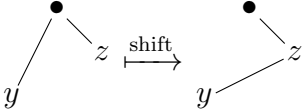

Finally, the shift of a t-shelf $T$ is defined recursively by shifting, in order,
the right t-shelf of $T$, the root of $T$, and then the left t-shelf of $T$.
See the first part of Figure~\ref{proc} for an illustration.
Obviously, the shift of a t-shelf $T$ is still a t-shelf, and if $T$ is the
standard representation of some unordered binary increasing  tree, then the shift of $T$ avoids $\mo$,
and its root does not have a left child.

\begin{figure}[ht!]
  \centering
  \raisebox{-0.5\height}{%
    \begin{tikzpicture}[scale = 0.6, inner sep=1pt]

      \node (n0) at (4,6) {$1$};
      \node (n1) at (5,5) {$2$};
      \node (n2) at (6,4) {$3$};
      \node (n3) at (2,3) {$4$};
      \node (n4) at (3,2) {$5$};
      \node (n5) at (1,1) {$6$};

      \path
      (n0) edge (n1) edge (n3)
      (n1) edge (n2)
      (n3) edge (n4) edge (n5)
      ;
    \end{tikzpicture}}\qquad
  $\xmapsto{\text{recursive shift}}$
  \raisebox{-0.5\height}{%
    \begin{tikzpicture}[scale = 0.6, inner sep=1pt]

      \node (n0) at (4,6) {$1$};
      \node (n1) at (5,5) {$2$};
      \node (n2) at (6,4) {$3$};
      \node (n3) at (2,3) {$4$};
      \node (n4) at (3,2) {$5$};
      \node (n5) at (1,1) {$6$};

      \path
      (n0) edge (n1)
      (n1) edge (n2) edge (n3)
      (n3) edge (n4)
      (n4) edge (n5)
      ;
    \end{tikzpicture}}
  $\xmapsto{\text{Root deletion}}$
  \raisebox{-0.5\height}{%
    \begin{tikzpicture}[scale = 0.6, inner sep=1pt]

      \node (n1) at (5,5) {$1$};
      \node (n2) at (6,4) {$2$};
      \node (n3) at (2,3) {$3$};
      \node (n4) at (3,2) {$4$};
      \node (n5) at (1,1) {$5$};

      \path
      (n1) edge (n2) edge (n3)
      (n3) edge (n4)
      (n4) edge (n5)
      ;
    \end{tikzpicture}}
  \caption{A unordered binary increasing  tree in standard representation and its
    image after the recursive shift process, and after the deletion of the root.}
  \label{proc}
\end{figure}
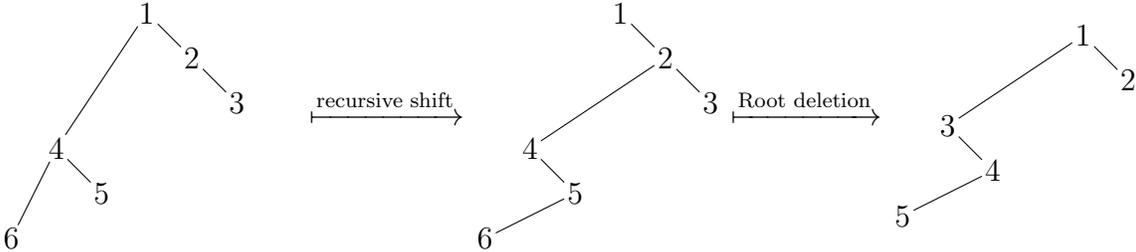

\begin{thm}
\label{th6}
There is a bijection between unordered binary increasing trees with $n+1$ nodes and the set
$\mathcal{B}_n(P)$ of t-shelves of size $n$ avoiding the pattern $P=\mo$.
\end{thm}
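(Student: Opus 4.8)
The plan is to realize the claimed bijection as a composition of three maps, exactly along the construction set up in the preceding paragraphs, and to isolate the invertibility of the recursive shift as the single substantial point. Write $\mathcal{S}$ for the set of size-$(n+1)$ t-shelves arising as standard representations, that is, those t-shelves in which every only-child is a right child and every node with two children has its right child smaller than its left child; write $\mathcal{S}'$ for the set of size-$(n+1)$ t-shelves that avoid $\mo$ and whose root has no left child. The map $\Psi$ sends an unordered binary increasing tree of size $n+1$ first to its standard representation in $\mathcal{S}$, then to its recursive shift in $\mathcal{S}'$, and finally to the size-$n$ t-shelf obtained by deleting the root. I would prove each of the three stages is a bijection onto the indicated set and compose them.

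The first stage is immediate from the definitions: the standard representation is a canonical orientation of the sibling pairs and of the only-children, so it is injective, its image is exactly $\mathcal{S}$, and its inverse simply forgets the left/right labels. The third stage is equally short. An element of $\mathcal{S}'$ has a root with no left child, so deleting that root and renormalizing the labels returns its right t-shelf, which has size $n$ and, being a sub-t-shelf of an $\mo$-avoider, still avoids $\mo$; conversely, grafting a fresh minimal root onto any $\mo$-avoiding t-shelf of size $n$ as a right-only parent cannot create a left--left chain through the new root, so it lands back in $\mathcal{S}'$. These two operations are mutually inverse, giving a bijection between $\mathcal{S}'$ and $\mathcal{B}_n(\mo)$.

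The heart of the argument, and the step I expect to be the main obstacle, is the second stage: that the recursive shift restricts to a bijection from $\mathcal{S}$ onto $\mathcal{S}'$. Well-definedness of the forward map is already recorded above (the shift of a standard representation avoids $\mo$ and has a root with no left child). For invertibility I would define an explicit \emph{unshift} on $\mathcal{S}'$ and prove by induction on the number of nodes that unshift and shift are mutually inverse. Given $S\in\mathcal{S}'$ with root $r$: if $r$ is a leaf, return it; if $r$ has a right child $\rho$ with no left child, recurse on the right subtree and keep $r$ as a right-only parent; and if $r$ has a right child $\rho$ that does carry a left child $\ell$, detach the subtree rooted at $\ell$, recurse on it and on the remaining tree rooted at $\rho$, and reassemble them as the left and right subtrees of $r$. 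The inductive verification of $\text{unshift}\circ\text{shift}=\mathrm{id}$ and $\text{shift}\circ\text{unshift}=\mathrm{id}$ then proceeds by the same three-way case split on the root.

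Two observations are what make this unshift well defined and correct, and I would foreground them. First, the case split on whether $\rho$ has a left child exactly detects whether a root-move was performed by the shift, precisely because the shift always outputs a tree whose root has no left child; so the ``no move'' branch is recognized by $\rho$ having no left child, and the ``move'' branch by $\rho$ carrying the moved node $\ell$. Second, in the move branch $\ell$ is a left child of $\rho$, and $\mo$-avoidance forces $\ell$ itself to have no left child (otherwise $\rho,\ell$ and that grandchild would form an $\mo$); hence the subtree rooted at $\ell$ again lies in $\mathcal{S}'$ and the recursion genuinely descends. Finally, since $\ell$ is a child of $\rho$ we have $\ell>\rho$, so the reassembled node has its right child smaller than its left child and the output lies in $\mathcal{S}$, as required. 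Composing the three bijections yields $\Psi$, whose counting is consistent with Corollary~\ref{cor21}; the only delicate bookkeeping is the inductive matching of cases in this second stage.
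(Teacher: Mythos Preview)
Your proposal is correct and follows essentially the same three-stage composition as the paper: standard representation, recursive shift, then root deletion. The paper's own proof simply asserts that the composite map ``is injective, and any t-shelf with $n$ nodes that avoids $\mo$ can be obtained by this mapping'' without further justification; your contribution is to make the middle stage rigorous by constructing the explicit \emph{unshift} and checking the two key structural facts (that $\rho$ having a left child detects exactly the ``move'' case, and that $\mo$-avoidance forces $\ell$ to have no left child so the recursion descends within $\mathcal{S}'$). This is the same bijection, just with the details the paper omits actually supplied.
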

\begin{proof}
Let $S$ be an unordered binary increasing tree with $n+1$ nodes and $T$ be the shift of its standard representation.
As mentioned above, $T$ is a t-shelf and its root does not have a left child.
We define the mapping $S\mapsto U$, where $U$ is the t-shelf obtained after deleting
the root of $T$ and decreasing by one each label of the obtained t-shelf,
see Figure~\ref{proc}. This mapping is injective, and any t-shelf with $n$ nodes that avoids $\mo$ can be obtained by this mapping from an unordered binary increasing tree.
\end{proof}

Let us remark that unordered binary increasing trees are equinumerous with alternating permutations
that starts by a descent, as proved by Foata and
Sch{\"u}tzenberger~\cite{foa}.  The corresponding
bijection is given by Donaghey~\cite{don}. Using  Donaghey's bijection together with the bijection in Theorem~\ref{th6},
we obtain a one-to-one correspondence between  alternating
permutations starting with a descent and t-shelves avoiding $\mo.$

\bibliographystyle{amsplain}

\end{document}